\tikzset{cross/.style={path picture={
    \draw[black]
  (path picture bounding box.south east) -- (path picture bounding box.north west) (path picture bounding box.south west) -- (path picture bounding box.north east);
  }}}
	\newtheorem{corollary}{Corollary}[]
	\newtheorem{theorem}{Theorem}[]
\newcommand{\defeq}{\mathrel{:\mkern-0.25mu=}}
\newcommand{\R}{\ensuremath{\mathbb{R}}}
\newcommand{\cfrp}{CFR$^+$}
\newcommand{\rmp}{RM$^+$}
\newcommand{\be}{\begin{eqnarray}}
\newcommand{\ee}[1]{\label{#1}\end{eqnarray}}
	\newcommand{\ese}{\end{eqnarray*}}
	\newcommand{\bse}{\begin{eqnarray*}}
	\def\beq{\begin{equation}}
	\def\eeq{\end{equation}}
	\def\fnote#1{\footnote}
	\def\*{{{\LARGE\bf $^*$}}}
	\def\R{{\mathbb{R}}}
\newtheorem{observation}{Observation}
\newcommand{\subt}[1]{{\triangle}_{#1}}
\newcommand{\tinfosetvalue}[1]{{V}^t_{\subt{#1}}}
\newcommand{\xhat}{{\hat x}}
\newcommand{\xhatvalue}[2]{{\hat V}^t_{\subt{#1}}(#2)}
\newcommand{\childinfosets}[1]{\mathcal{C}_{#1}}
\newcommand{\cumulativeregret}[2]{R^{#2}_{\subt{#1}}}
\newcommand{\laminarregret}[2]{{\hat R}^{#2}_{#1}}
\newcommand{\averageregret}[2]{\bar{R}^{#2}_{\subt{#1}}}
\newcommand{\losshat}[2]{\hat{\ell}^{#2}_{#1}}
\newcommand{\convsets}{\mathcal{J}}
\DeclareRobustCommand{\obspoint}{\tikz{\node[scale=.85, draw, cross, circle, inner sep=1mm] {}}}
\DeclareRobustCommand{\enddecproc}{\tikz{\node[draw, circle, regular polygon, regular polygon sides=6, shape border rotate=180, inner sep=.7mm, fill=black] {}}}
\DeclareRobustCommand{\computeloss}{\tikz[scale=.25]{\draw[thick] (0, 0) -- (1, 0) -- (1, 1) -- (0, 1) -- (0, 0) -- (1.0, 0); \draw[thick] (0, 0) -- (1, 1);}}
\begin{document}

 \title{Online Convex Optimization for Sequential Decision Processes and
   Extensive-Form Games}
\author{Gabriele Farina \and Christian Kroer \and Tuomas Sandholm\\Computer Science Department\\Carnegie Mellon University\\\texttt{\{gfarina,ckroer,sandholm\}@cs.cmu.edu}}
\maketitle
\begin{abstract}
  Regret minimization is a powerful tool for solving large-scale
extensive-form games. State-of-the-art methods rely on minimizing regret locally
at each decision point. In this work we derive a new framework for regret
minimization on sequential decision problems and extensive-form games with
general compact convex sets at each decision point and general convex losses, as
opposed to prior work which has been for simplex decision points and linear
losses. We call our framework \emph{laminar regret decomposition}. It 
generalizes the CFR algorithm to this more general setting. Furthermore, our
framework enables a new proof of CFR even in the known setting, which is
derived from a perspective of decomposing polytope regret, thereby leading to an
arguably simpler interpretation of the algorithm. Our generalization to convex
compact sets and convex losses allows us to develop new algorithms for several
problems: regularized sequential decision making, regularized Nash equilibria in
extensive-form games, and computing approximate extensive-form perfect
equilibria. Our generalization also leads to the first regret-minimization
algorithm for computing reduced-normal-form quantal response equilibria
based on minimizing local regrets. Experiments show that our framework
leads to algorithms that scale at a rate comparable to the fastest variants of
counterfactual regret minimization for computing Nash equilibrium, and therefore
our approach leads to the first algorithm for computing quantal response
equilibria in extremely large games. Finally we show that our framework enables a
new kind of scalable opponent exploitation approach.

\end{abstract}

\section{Introduction}

\emph{Counterfactual regret minimization (CFR)}~\citep{Zinkevich07:Regret}, and the newest variant \emph{{\cfrp}}~\citep{Tammelin15:Solving}, have been a
central component in several recent milestones in solving imperfect-information \emph{extensive-form games (EFGs)}. \citet{Bowling15:Heads} used {\cfrp} to near-optimally solve heads-up limit Texas hold'em.
\citet{Brown17:Superhuman} and \citet{Moravvcik17:DeepStack} used CFR variants, along with
other scalability techniques, to create AIs that beat professional poker
players at the larger game of heads-up no-limit Texas hold'em. 

We can view the CFR approach more
generally as a methodology for setting up regret minimization for
sequential decision problems (whether single- or multi-agent), where each
decision point requires selecting either an action or a point from the
probability distribution over actions. The crux of CFR is counterfactual regret,
which leads to a definition of regret local to each decision point. CFR can then
be viewed as the observation, and proof, that bounds on counterfactual regret,
which can be minimized locally, lead to bounds on the overall regret. To minimize local regret, the framework relies on regret
minimizers that operate on a simplex (typically of probabilities over the available actions), such as \emph{regret matching} (RM)~\citep{Blackwell56:analog} or the
newer variant \emph{regret matching$^+$} (\rmp)~\citep{Tammelin15:Solving}.

In this paper we consider the more general problem of how to minimize regret
over a sequential decision-making (SDM) polytope, where we allow arbitrary
compact convex subsets of simplexes at each decision point (as opposed to only
simplexes in CFR), and general convex loss functions (as opposed to only linear
losses in CFR). This allows us to model a form of online convex optimization
over SDM polytopes. We derive a decomposition of the polytope regret into local
regret at each decision point. This allows us to minimize regret locally as with
CFR, but for general compact convex decision points and convex losses. We call
our decomposition \emph{laminar regret decomposition (LRD)}. We call our overall
framework for convex losses and compact convex decision points \emph{laminar
  regret minimization (LRM)}. As a special case, our framework provides an
alternate view of why CFR works---one that may be more intuitive for those
with a background in online convex optimization.

Our generalization to general compact convex sets (we restrict our attention to
convex subsets of simplexes, but this is without loss of generality due to the
convexity-preserving properties of affine transformations) allows us to model
entities such as $\epsilon$-perturbed
simplexes~\citep{Farina17:Extensive,Farina17:Regret,Kroer17:Smoothing}, and thus
yields new algorithms for computing approximate equilibrium refinements for
EFGs.

General convex losses in SDM and EFG contexts have, to the best of our
knowledge, not been considered before. This generalization enables fast
algorithms for many new settings. One is to compute
regularized zero-sum equilibria. If we apply a convex regularization function at
each simplex, we can apply our framework to solve the resulting game. For
the negative entropy regularizer this is equivalent to the dilated entropy
distance function used for solving EFGs with first-order
methods~\citep{Hoda10:Smoothing,Kroer15:Faster,Kroer17:Theoretical}.
\citet{Ling18:What} show that dilated-entropy-regularized EFGs are equivalent to
quantal response equilibria (QRE) in the corresponding reduced normal-form game.
Thus our result yields the first regret-minimization algorithm for computing
reduced-normal-form quantal response equilibria in EFGs.

Our experiments show that QREs and
$\ell_2$-regularized equilibria can be computed at a rate that is competitive
with that of {\cfrp} for computing Nash equilibria, and substantially faster in
some cases. This shows that our approach can be used to compute regularized
equilibria in extremely large games such as real poker games. We go on to
show that our framework also enables a new kind of opponent-exploitation
approach for extremely large games, by adding a convex regularizer that penalizes
the exploiter for being far away from a pre-computed Nash equilibrium, and thus potentially exploitable herself.

%



\section{Regret Minimization}
We work the online learning framework called \emph{online convex optimization}~\citep{Zinkevich03:Online}.
In this setting, a decision maker repeatedly plays against an unknown environment by making a sequence of decisions $x^1, x^2, \dots$. As customary, we assume that the set $X \subseteq \mathbb{R}^n$ of all possible decisions for the decision maker is convex and compact. The outcome of each decision $x^t$ is evaluated as $\ell^t(x^t)$, where $\ell^t$ is a convex function \emph{unknown} to the decision maker until the decision is made. Abstractly, a \emph{regret minimizer} is a device that supports two operations:
\begin{itemize}[nolistsep,itemsep=0mm]
  \item it gives a \emph{recommendation} for the next decision $x^{t+1} \in X$;
  \item it receives/observes the convex loss function $\ell^t$ used to ``evaluate'' decision $x^t$.
\end{itemize}
The learning is \emph{online} in the sense that the decision maker/regret minimizer's next decision, $x^{t+1}$, is based only on the previous decisions $x^1, \dots, x^t$ and corresponding loss observations $\ell^1, \dots, \ell^t$.

In this paper, we adopt (external) \emph{regret} as a way to evaluate the quality of the regret minimizer. Formally, the \emph{cumulative regret} at time $T$ is defined as

\vspace{-3mm}
\[
  R^T \defeq \sum_{t=1}^T \ell^t(x^t) - \min_{\hat x \in X} \sum_{t=1}^T \ell^t(\hat x),
\]
\vspace{-3mm}

\noindent It measures the difference between the loss cumulated by the sequence of decisions $x^1, \dots, x^T$ and the loss that would have been cumulated by playing the best time-independent decision $\hat x$ in hindsight. A desirable property of a regret minimizer is
\emph{Hannan consistency}: the average regret approaches zero, that is, $R^T$ grows at a \emph{sublinear} rate in $T$.

We now review a particular very general regret-minimization algorithm:
\emph{online mirror descent} (OMD). The generality of OMD arises because it
performs updates in the dual space, where the duality is given by a
\emph{mirror map} $\Phi$, a strongly-convex differentiable function on $X$ which
defines a vector field in which gradient updates are performed.
At each time step OMD performs the following update:

\vspace{-3mm}
\[
  \nabla \Phi (y_{t+1}) = \nabla \Phi (y_t) - \eta \nabla \ell^t(x_t),
\]
\vspace{-3mm}

and then recommends the point

\vspace{-3mm}
\[
  x_{t+1} = \arg\min_{x \in X} \Phi(x) - \langle \nabla\Phi(y_{t+1}), x \rangle.
\]
\vspace{-3mm}

If OMD is initialized with $\nabla\Phi(y_1)=0$ and $x_1$ as the corresponding
minimizer, it satisfies the regret bound

\vspace{-3mm}
\[
  R^T \leq \max_{u,v\in X} \{\Phi(u)-\Phi(v)\} + \eta \sum_{t=1}^T
  \|\nabla\ell^t(x_t) \|^2_*,
\]
\vspace{-3mm}

\noindent where $\|\cdot\|_*$ is the dual norm with respect to which $\Phi$ is strongly convex. OMD is very general in the sense that we can choose $\Phi$ and the norm for
measuring strong convexity so that it fits the problem at hand. For example,
this allows only a logarithmic dependence on the dimension of $X$ when
$X=\Delta_n$ and $\Phi$ is the negative entropy. By specific choices of $\eta$
and $\Phi$ it is possible to show that this algorithm generalizes online
variants of gradient descent, exponential weights, and regularized follow-the-leader~\citep{Zinkevich03:Online,Hazan10:Extracting,Hazan16:Introduction}. The regret generally grows
at a rate of $T^{-1/2}$ for these algorithms.

We could also run OMD with $X$ being the entire SDM polytope. For example, we
could do that by applying the \emph{distance generating function (DGF)} of \citet{Kroer17:Theoretical}. However,
decomposition into local regret minimization at each decision point has been
dramatically more effective in practice, possibly because this allows better leveraging
of the structure of the problem.


\textbf{Linear losses and games.} Regret minimization methods for normal-form
and extensive-form games usually involve minimizing the regret induced by linear
loss functions. When the domain at each decision point $X_j$ is the
$n_j$-dimensional simplex $\Delta_{n_j}$, the two most successful
regret-minimizers in practice have been \emph{regret matching}~\citep{Blackwell56:analog} and
\emph{regret matching$^+$}~\citep{Tammelin15:Solving}. These regret minimizers
also have regret that grows at a rate $T^{-1/2}$ as with OMD, but they have a
worse dependence on the dimension $n_j$. Nonetheless, they seem to perform
better in practice when coupled with CFR.

\section{Sequential Decision Making}
It turns out that the results of this paper can be proven in a general setting which we call a sequential decision making. At each stage, the
agent chooses a point in a simplex (or a subset of it). The chosen point incurs a
convex loss and defines a probability distribution over the actions of the
simplex. An action is sampled according to the chosen distribution, and the
agent then arrives at a next decision point, potentially randomly selected out
of several candidates. The reason the agent chooses points in the convex hull of
actions, rather than simply an action, is that this gives us greater flexibility
in representing decision points where agents wish to randomize over actions.
This is the case for example in game-theoretic equilibria or when solving the
decision-making problem with an iterative optimization algorithm.

Formally, we assume that we have a set of decision points $\convsets$. Each
decision point $j\in \convsets$ has a set of actions $A_j$ of size $n_j$. The
decision space at each decision point $j$ is represented by a convex set $X_j
\subseteq
\Delta_{n_j}$. 
A point $x_j \in X_j$ represents a probability distribution over $A_j$. When a
point $x_j$ is chosen, an action is sampled randomly according to $x_j$. Given a
specific action at $j$, the set of possible decision points that the agent may
next face is denoted by $\childinfosets{j,a}$. It can be an empty set if no
more actions are taken after $j,a$. We assume that the decision points form a
tree, that is, $\childinfosets{j,a} \cap \childinfosets{j',a'} = \emptyset$ for all
other convex sets and action choices $j',a'$. This condition is equivalent to
the perfect-recall assumption in extensive-form games, and to conditioning on
the full sequence of actions and observations in a finite-horizon
partially-observable decision process.
In our
definition, the decision space starts with a root decision point, whereas in
practice multiple root decision points may be needed, for example in order to
model different starting hands in card games. Multiple root decision points can
be modeled in our framework by having a dummy root decision point with only a single action.

The set of possible next decision points after choosing action $a\in A_j$ at
$X_j$, denoted $\childinfosets{j,a}$, can be thought of as representing the
different decision points that an agent may face after taking
action $a$
and then making an observation on which she can condition her next action choice.
For example, in a card game an action may be to raise (that is, put money into the
pot), and an observation could
be the set of actions taken by the other
players, as well as any cards dealt out, until the agent acts again. Each
specific observation of actions and cards then corresponds to a specific
decision point in $\childinfosets{j,a}$.

We will relate the regret over the whole decision space to regret at
subtrees in the decision space and individual convex sets. In order to do that
we need ways to refer to each of these structures. Given a strategy $x$,
 $x_j$ is the (sub)vector belonging to the decision space $X_j$ at decision point $j$. Similarly, $x_{j, a}$
is the scalar associated with action $a \in
A_j$ at decision point $j$, and in typical applications it is the probability of choosing action $a$ at decision point $j$. Subscript $\subt{j}$ denotes the portion of $x$ containing the decision
variables for decision point $j$ and all its descendants. Finally, $x$
refers to the vector for the whole treeplex, which corresponds to subscript
$\subt{r}$ where $r$ is the root of the tree.

As an illustration, consider the game of Kuhn poker~\citep{Kuhn50:Simplified}.
Kuhn poker consists of a three-card deck: king, queen, and jack. Each player is
dealt one of the three cards and a  single round of betting occurs. A complete
game description is given in the appendix.
The action space for the first player is shown in Figure~\ref{fig:kuhn treeplex
  player1}. For instance, we have: $\convsets = \{0,1,2,3,4,5,6\}$; $n_0 = 1,
X_0 = \Delta_1 = \{1\}$; $n_j = 2, X_j = \Delta_2$ for all $j\in\convsets
\setminus \{0\}$; $A_0 = \{\text{start}\}$, $A_1 = A_2 = A_3 = \{\text{check},
\text{raise}\}$, $A_4 = A_5 = A_6 = \{\text{fold}, \text{call}\}$;
$\childinfosets{0,\text{start}} = \{1, 2, 3\}$, $\childinfosets{1, \text{raise}}
= \emptyset$, $\childinfosets{3,\text{check}} = \{6\}$; $X_{\subt{1}}= X_1\times
X_4$, $X_{\subt{4}}=X_4$; $x_1 = [x_{1,\text{check}},x_{1,\text{raise}}]$; $x_{\subt{1}} =
[x_1;x_4]$, etc.

\begin{figure}[!h]
  \vspace{-3mm}
  \centering\includegraphics[width=.7\linewidth]{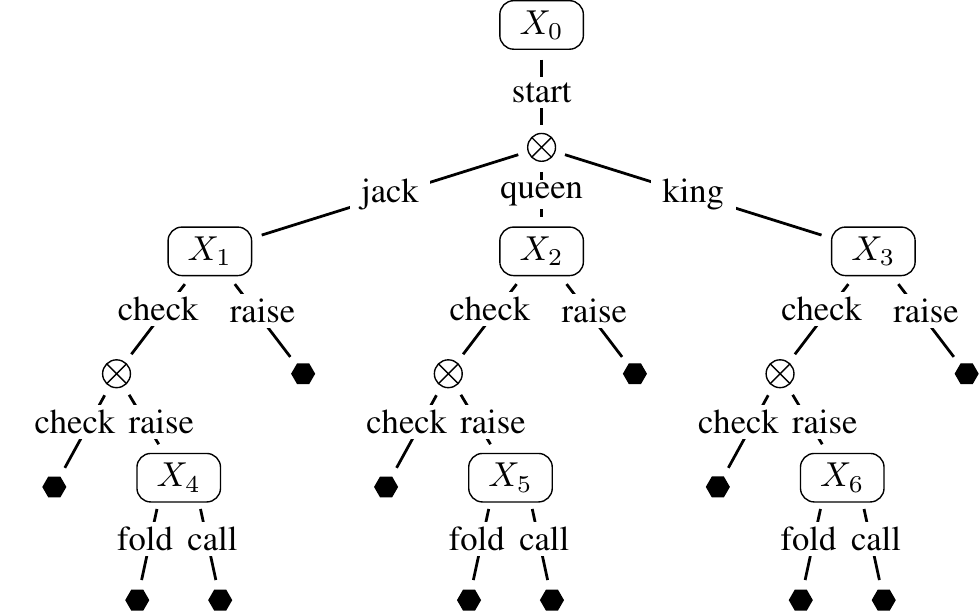}
  \vspace{-3mm}
  \caption{Sequential action space for the first player in the game of Kuhn poker. \raisebox{-.2mm}{\obspoint{}} denotes an observation point; \enddecproc{} represents the end of the decision process.  \vspace{-4mm}}
  \label{fig:kuhn treeplex player1}
\end{figure}

In addition to games, our model captures, for example, POMDPs and MDPs where we condition on the entire history of
observations and actions.

\vspace{-.3cm}


\section{Regret in Sequential Decision Making}

We assume that we are playing a sequence of $T$ iterations of a sequential
decision process. At each iteration $t$ we choose a strategy $x\in X$ and are
then given a loss function of the form
{\footnotesize
\begin{align}
    \vspace{-1mm}
  \ell^t(x) \defeq \sum_{j \in \convsets} \pi_j(x) \ell^t_j(x_j),
  \label{eq:loss}\vspace{-1mm}
\end{align}}
where $\ell^t_j: X_j \rightarrow \R$ is a convex function for each $j\in
\convsets$.
We coin loss functions of this form \emph{separable}, and they will play an important role in our results.
Our goal is to compute a new strategy vector $x^t$ such that the
regret across all $T$ iterations is as low as possible against any sequence of
loss functions.

We now summarize definitions for the value and regret associated
with convex sets and strategies.
First we have the value of convex set $j$ at iteration $t$ when following
 strategy $\xhat$:
{\footnotesize
\[\vspace{-1mm}
  \xhatvalue{j}{\xhat_{\subt{j}}} \defeq \ell^t_j(\xhat^t_j) + \sum_{a\in A_j}
  \sum_{j' \in \childinfosets{j,a}} \hat{x}_{j,a} \xhatvalue{j'}{\xhat_{\subt{j'}}}.
  \vspace{-1mm}
\]}
This definition denotes the utility associated with starting at convex set $X_j$
rather than at the root. Thus we have exchanged the term $\pi_j(x)$ with one for
$\ell^t_j$ and with $\xhat_{j,a}$ for $\tinfosetvalue{j'}$; this allows us to
write the value as a recurrence.
We will be particularly interested in the value of $x^t$, which we denote
$
  \tinfosetvalue{j} \defeq \hat{V}_{\subt{j}}^t(x^t_{\subt{j}}).
$

Now we can define the cumulative regret at convex set $j$ across all $T$
iterations as

\vspace{-2mm}
{\footnotesize
\[
  \cumulativeregret{j}{T} \defeq  \sum_{t=1}^T \tinfosetvalue{j} - \min_{\xhat_{\subt{j}}}  \sum_{t=1}^T.
    \xhatvalue{j}{\xhat_{\subt{j}}},
\]
}
\vspace{-2mm}

This can equivalently be stated as

\vspace{-4mm}
{\footnotesize
\begin{align}
   \min_{\xhat_{\subt{j}}}  \sum_{t=1}^T
    \xhatvalue{j}{\xhat_{\subt{j}}} =   \sum_{t=1}^T \tinfosetvalue{j} - \cumulativeregret{j}{T}.
  \label{eq:regret_j_rewrite}
\end{align}}
  \vspace{-2mm}
  
Finally, \emph{average regret} is $\averageregret{j}{T}= \frac{1}{T}\cumulativeregret{j}{T}$.


\section{Laminar Regret Decomposition}

We now define a new parameterized class of loss functions for each subtree $X_j$
which we will show can be used to minimize regret over $X$ by minimizing that
loss function independently at each convex set $X_j$. The loss function is

\vspace{-4mm}
\begin{align}
  \losshat{j}{t}(x_j)\! \defeq\! \ell^t_j(x_j)
  +\! \sum_{a\in A_j} \sum_{j' \in \childinfosets{j,a}}
  x_{j,a} \tinfosetvalue{j'} .
  \label{eq:losshat}
\end{align}
\vspace{-4mm}

It is convex since $\ell^t_j$  is convex by hypothesis
and we are only adding a linear term to it. Strict convexity is also preserved,
and for strongly convex losses, the strong convexity parameter remains unchanged.

We now prove that the regret at information set $j$ decomposes into regret terms
depending on $\hat{\ell}^t_j$ and a sum over the regret at child convex sets:
\begin{theorem}  \label{thm:lrd}
  The cumulative regret at a decision point $j$ can be decomposed as

  \vspace{-5mm}
  {
    \footnotesize
  \begin{align*}
    &\cumulativeregret{j}{T}
    = \sum_{t=1}^T  \losshat{j}{t}(x^t) - \min_{\xhat_j \in X_j} \bigg\{\!
    \sum_{t=1}^T \losshat{j}{t}(\xhat_j)
    - \sum_{a \in A_j} \sum_{j' \in \childinfosets{j,a}} \!\!\xhat_{j,a}\cumulativeregret{j'}{T}
    \bigg\}
  \end{align*}
  }
    \vspace{-4mm}

\end{theorem}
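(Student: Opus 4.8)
The plan is to read the claimed identity as a statement relating the regret $\cumulativeregret{j}{T}$ at $j$ to the regrets $\cumulativeregret{j'}{T}$ of its immediate children, and to prove it by unfolding exactly one level of the recursive definitions --- no induction over the tree is needed. Starting from $\cumulativeregret{j}{T} = \sum_{t=1}^T \tinfosetvalue{j} - \min_{\xhat_{\subt{j}}} \sum_{t=1}^T \xhatvalue{j}{\xhat_{\subt{j}}}$, there are two things to establish: (i) the ``played'' term $\sum_t \tinfosetvalue{j}$ equals $\sum_t \losshat{j}{t}(x^t)$, and (ii) the ``comparator'' term $\min_{\xhat_{\subt{j}}} \sum_t \xhatvalue{j}{\xhat_{\subt{j}}}$ equals the bracketed minimum in the theorem. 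For (i), I would plug the played block $x^t_j$ into the definition \eqref{eq:losshat} of $\losshat{j}{t}$; the result is $\ell^t_j(x^t_j) + \sum_{a\in A_j}\sum_{j'\in\childinfosets{j,a}} x^t_{j,a}\tinfosetvalue{j'}$, which is exactly the recursion defining $\tinfosetvalue{j} = \xhatvalue{j}{x^t_{\subt{j}}}$ once we note $\xhatvalue{j'}{x^t_{\subt{j'}}} = \tinfosetvalue{j'}$. Hence $\tinfosetvalue{j} = \losshat{j}{t}(x^t_j)$ for every $t$ (the argument $x^t$ in the theorem denoting, as usual, its restriction to the block $X_j$), and summing over $t$ gives (i).

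For (ii), I would expand $\sum_t \xhatvalue{j}{\xhat_{\subt{j}}}$ through the recurrence, obtaining $\sum_t \ell^t_j(\xhat_j) + \sum_t \sum_{a}\sum_{j'} \xhat_{j,a}\,\xhatvalue{j'}{\xhat_{\subt{j'}}}$. The decision variable $\xhat_{\subt{j}}$ decomposes as the pair consisting of $\xhat_j\in X_j$ together with the subtree variables $\xhat_{\subt{j'}}$, one per child $j'\in\childinfosets{j,a}$, and --- crucially --- these subtree variables range over disjoint blocks of coordinates precisely because of the perfect-recall hypothesis $\childinfosets{j,a}\cap\childinfosets{j',a'}=\emptyset$. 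So for fixed $\xhat_j$ the objective is separable across children, and since every coordinate $\xhat_{j,a}\ge 0$ (as $X_j\subseteq\Delta_{n_j}$) the inner minimizations push through the nonnegatively-weighted sums: $\min_{\xhat_{\subt{j}}}\sum_t\xhatvalue{j}{\xhat_{\subt{j}}} = \min_{\xhat_j\in X_j}\bigl\{\sum_t\ell^t_j(\xhat_j) + \sum_{a}\sum_{j'}\xhat_{j,a}\,\min_{\xhat_{\subt{j'}}}\sum_t\xhatvalue{j'}{\xhat_{\subt{j'}}}\bigr\}$. Now substitute the rewrite \eqref{eq:regret_j_rewrite}, $\min_{\xhat_{\subt{j'}}}\sum_t\xhatvalue{j'}{\xhat_{\subt{j'}}} = \sum_t\tinfosetvalue{j'} - \cumulativeregret{j'}{T}$; regrouping the term $\sum_{a}\sum_{j'}\xhat_{j,a}\sum_t\tinfosetvalue{j'} = \sum_t\sum_a\sum_{j'}\xhat_{j,a}\tinfosetvalue{j'}$ together with $\sum_t\ell^t_j(\xhat_j)$ reconstitutes $\sum_t\losshat{j}{t}(\xhat_j)$, and what remains is exactly $-\sum_{a}\sum_{j'}\xhat_{j,a}\cumulativeregret{j'}{T}$ inside the minimum. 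Combining (i) and (ii) yields the claimed decomposition.

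The step that requires the most care is the interchange of the minimization over $\xhat_{\subt{j}}$ with the nonnegatively-weighted sum over children. Two points must be checked: first, that the feasible set genuinely factors as the Cartesian product $X_j\times\prod_{a}\prod_{j'\in\childinfosets{j,a}}X_{\subt{j'}}$, which is exactly the content of the disjoint-subtree (perfect-recall) assumption and fails without it; second, that each weight $\xhat_{j,a}$ is nonnegative --- true since $\xhat_j\in X_j\subseteq\Delta_{n_j}$ --- where one should treat $\xhat_{j,a}=0$ separately, the corresponding subtree term being $0$ regardless of $\xhat_{\subt{j'}}$ so that writing it as $\xhat_{j,a}\cdot\min_{\xhat_{\subt{j'}}}(\cdot)$ remains valid. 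Finiteness of the inner minima (so these manipulations are legitimate) follows from compactness of each $X_{\subt{j'}}$ and continuity of the value functions, the latter inherited from continuity of the convex losses $\ell^t_{j'}$.
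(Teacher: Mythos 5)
Your proposal is correct and follows essentially the same route as the paper's own proof: expand the definition of $\cumulativeregret{j}{T}$, push the minimization through the nonnegatively-weighted sum over disjoint child subtrees, substitute \eqref{eq:regret_j_rewrite}, and regroup into $\losshat{j}{t}$ using $\tinfosetvalue{j} = \losshat{j}{t}(x^t_j)$. The only difference is that you spell out the justification for the min/sum interchange (product structure of the feasible set, nonnegativity of $\xhat_{j,a}$, compactness) which the paper compresses into the phrase ``we can sequentially minimize first over choices at $j$ and then over choices for child information sets.''
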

\vspace{-3mm}
\begin{proof}
  By definition, the cumulative regret $\cumulativeregret{j}{T}$ at time $T$ for decision point $j$ is:

  \vspace{-4mm}
  {\footnotesize
  \begin{align}
    &\sum_{t=1}^T \tinfosetvalue{j}
     -  \min_{\xhat_{\subt{j}}} \bigg\{  \sum_{t=1}^T
        \ell^t_j(\xhat_j) +  \sum_{t=1}^T\sum_{a\in A_j}\sum_{j' \in \childinfosets{j,a}} \xhat_{j,a}\xhatvalue{j'}{\xhat_{\subt{j'
     }}}
                          \bigg\} \nonumber\\[-3mm]
    &= \sum_{t=1}^T \tinfosetvalue{j}
      -  \min_{\xhat_j \in X_j} \bigg\{  \sum_{t=1}^T
        \ell^t_j(\xhat_j)\nonumber\\[-3mm]
     &\hspace{1.5cm} + \sum_{a\in A_j}\sum_{j' \in \childinfosets{j,a}} \hat{x}_{j,a} \min_{\xhat_{\subt{j'}}}  \sum_{t=1}^T \xhatvalue{j'}{\xhat_{\subt{j'}}}
                          \bigg\},
        \label{eq:lrd_proof_eq2}
  \end{align}
  }
  
  \vspace{-2mm}
  \noindent where the equalities follow first from expanding the definitions of
  $\cumulativeregret{j}{T}$ and $\xhatvalue{j}{\xhat_{\subt{j}}}$, and then using the fact that we can
  sequentially minimize first over choices at $j$ and then over choices for
  child information sets.

  Now we can use \eqref{eq:regret_j_rewrite} to get that \eqref{eq:lrd_proof_eq2} is equal to

  \vspace{-4mm}
  {\footnotesize
  \begin{align}
    &
    = \sum_{t=1}^T \tinfosetvalue{j}
      \!-\! \min_{\hat{x}_{j} \in X_j}  \bigg( \sum_{t=1}^T \losshat{j}{t}(\xhat_j)
         \!-\! \sum_{a\in A_j} \hat{x}_{j,a}\! \sum_{j' \in \childinfosets{j,a}}\!\! \cumulativeregret{j'}{t} \bigg).
         \label{eq:lrd_proof_eq3}
  \end{align}}
  \vspace{-2mm}

  Since $ \tinfosetvalue{j}$ already depends on $\tinfosetvalue{j'}$ for each
  child decision point $j'$ we have
  $\tinfosetvalue{j} = \losshat{j}{t}(x^t_j)$,
  where the equality follows by the definition of $\losshat{j}{t}$. Substituting
  this equality in \eqref{eq:lrd_proof_eq3} yields the statement.
\end{proof}

Theorem~\ref{thm:lrd} justifies the introduction of the concept of \emph{laminar regret} at each decision point $j\in \convsets$:

\vspace{-3mm}
{\footnotesize\[
  \laminarregret{j}{T} \defeq \sum_{t=1}^T \losshat{j}{t}(x^t_j) - \min_{\hat{x}_j\in X_j}\sum_{t=1}^T \losshat{j}{t}(\hat{x}_j).
\]}
\vspace{-2mm}

With this, we can write the cumulative subtree regret at decision point $j$ as a
sum of laminar regret at $j$ plus a recurrence term for each child decision
point. Applying this inductively gives the following theorem which tells us how
one can apply regret minimization locally on laminar regrets in order to
minimize regret in SDMs:
\begin{restatable}{theorem}{thmlrdrm}  \label{thm:lrd_rm}
  The cumulative regret on $X$ satisfies

  \vspace{-1mm}{\footnotesize$$ R^T \le \max_{\xhat \in X}\sum_{j \in \convsets} \pi_j(\xhat) \laminarregret{j}{T}. $$}\vspace{-2mm}
\end{restatable}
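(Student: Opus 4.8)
The plan is to prove the bound by strong induction on the height of the decision-point tree, peeling off one level at a time using Theorem~\ref{thm:lrd}. Recall that $R^T = \cumulativeregret{r}{T}$ where $r$ is the root, so it suffices to show that for every $j\in\convsets$,
\[
  \cumulativeregret{j}{T} \le \max_{\xhat_{\subt{j}}} \sum_{j'' \in \convsets_{\subt{j}}} \frac{\pi_{j''}(\xhat)}{\pi_j(\xhat)}\, \laminarregret{j''}{T},
\]
where $\convsets_{\subt{j}}$ is the set of decision points in the subtree rooted at $j$ and $\pi_{j''}(\xhat)/\pi_j(\xhat)$ is the product of the action probabilities on the path from $j$ to $j''$ (so that $\pi_{j''}(\xhat)=\pi_j(\xhat)$ when $j''=j$). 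Taking $j=r$, where $\pi_r(\xhat)=1$, gives the theorem.

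For the base case, a leaf decision point $j$ has $\childinfosets{j,a}=\emptyset$ for all $a$, so $\losshat{j}{t}(x_j)=\ell^t_j(x_j)$ and Theorem~\ref{thm:lrd} gives $\cumulativeregret{j}{T} = \laminarregret{j}{T}$, matching the claim since the subtree is just $\{j\}$. For the inductive step, I would start from the decomposition in Theorem~\ref{thm:lrd},
\[
  \cumulativeregret{j}{T} = \sum_{t=1}^T \losshat{j}{t}(x^t_j) - \min_{\xhat_j \in X_j}\Bigl\{ \sum_{t=1}^T \losshat{j}{t}(\xhat_j) - \sum_{a\in A_j}\sum_{j'\in\childinfosets{j,a}} \xhat_{j,a}\, \cumulativeregret{j'}{T}\Bigr\},
\]
and bound the negative term: $-\min_{\xhat_j}\{\cdots\} \le -\min_{\xhat_j}\sum_t \losshat{j}{t}(\xhat_j) + \max_{\xhat_j \in X_j}\sum_{a}\sum_{j'}\xhat_{j,a}\,\cumulativeregret{j'}{T}$. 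Combining the first two terms gives exactly $\laminarregret{j}{T}$. Then I would apply the inductive hypothesis to each $\cumulativeregret{j'}{T}$ inside the remaining maximum, so that $\cumulativeregret{j'}{T}$ is replaced by a max over $\xhat_{\subt{j'}}$ of $\sum_{j''\in\convsets_{\subt{j'}}} (\pi_{j''}(\xhat)/\pi_{j'}(\xhat))\laminarregret{j''}{T}$; multiplying through by $\xhat_{j,a}$ and noting $\xhat_{j,a}\cdot \pi_{j''}(\xhat)/\pi_{j'}(\xhat) = \pi_{j''}(\xhat)/\pi_j(\xhat)$ for $j''$ in the subtree of $j'$, everything reassembles into a single joint maximum over the strategy on the subtree rooted at $j$, yielding the displayed claim.

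The main technical subtlety is the bookkeeping that lets the outer $\max_{\xhat_j \in X_j}$ and the inner per-child maxima collapse into one joint maximum over $\xhat_{\subt{j}}$: because the tree structure makes the choice of $\xhat_j$ and the choices of $\xhat_{\subt{j'}}$ in disjoint child subtrees independent, the maxima are over a product domain and therefore commute with the sum, so $\max_{\xhat_j}\sum_a \xhat_{j,a}\sum_{j'}\max_{\xhat_{\subt{j'}}}(\cdots) = \max_{\xhat_{\subt{j}}}\sum_a \xhat_{j,a}\sum_{j'}(\cdots)$. This is the same ``sequential minimization/maximization over a tree'' move used in the proof of Theorem~\ref{thm:lrd}, and perfect recall (the disjointness of the $\childinfosets{j,a}$) is exactly what makes it valid. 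I expect the inequality direction (we get $\le$, not equality, the moment we split the $\min$ of a difference) and the index-manipulation identity $\pi_{j''}(\xhat) = \pi_j(\xhat)\,\xhat_{j,a}\,(\pi_{j''}(\xhat)/\pi_{j'}(\xhat))$ to be the only places needing care; the rest is routine substitution.
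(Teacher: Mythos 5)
Your proposal is correct and takes essentially the same route as the paper: both derive the recurrence $\cumulativeregret{j}{T} \le \laminarregret{j}{T} + \max_{\xhat_j \in X_j}\sum_{a\in A_j} \xhat_{j,a} \sum_{j' \in \childinfosets{j,a}} \cumulativeregret{j'}{T}$ from Theorem~\ref{thm:lrd} via the bound $-\min(f-g) \le -\min f + \max g$, and then unroll it inductively over the tree. The paper compresses the unrolling into a single sentence; your explicit bookkeeping of the reach-probability ratios and the collapse of the per-child maxima into one joint maximum over $\xhat_{\subt{j}}$ is precisely the content of that omitted step.
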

\begin{corollary}
 If each individual laminar regret $\laminarregret{j}{T}$ on each of the
  convex domains $X_j$ grows sublinearly, overall regret on $X$ grows
  sublinearly.
\end{corollary}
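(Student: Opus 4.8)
The plan is to apply Theorem~\ref{thm:lrd} recursively, from the leaves of the decision tree up to the root, using the nonnegativity of the decision variables $\hat x_{j,a}$ to turn the child-regret terms into a clean upper bound. The first step is to notice that $R^T$ is itself a subtree regret: unrolling the recurrence defining $\xhatvalue{r}{\hat x_{\subt r}}$ from the root $r$ shows that $\xhatvalue{r}{\hat x} = \sum_{j\in\convsets}\pi_j(\hat x)\,\ell^t_j(\hat x_j) = \ell^t(\hat x)$ for every $\hat x\in X$ (this is exactly the separable form \eqref{eq:loss}), and likewise $\tinfosetvalue{r}=\ell^t(x^t)$, so that $\cumulativeregret{r}{T}=\sum_{t}\ell^t(x^t)-\min_{\hat x\in X}\sum_t \ell^t(\hat x)=R^T$. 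Hence it suffices to bound $\cumulativeregret{r}{T}$.

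The core is an induction over the tree of decision points. For $j\in\convsets$ let $X_{\subt{j}}$ denote the subtree decision space rooted at $j$, and for a descendant $j''$ of $j$ write $\pi^{(j)}_{j''}(\hat x_{\subt{j}})$ for the product of the action probabilities along the path from $j$ down to $j''$ (with $\pi^{(j)}_{j}\equiv 1$). I would prove by induction on the height of $j$ that
\[\cumulativeregret{j}{T}\ \le\ M_j\ \defeq\ \max_{\hat x_{\subt{j}}\in X_{\subt{j}}}\ \sum_{j''\ \text{in subtree of}\ j}\pi^{(j)}_{j''}(\hat x_{\subt{j}})\,\laminarregret{j''}{T}.\]
The base case is a leaf $j$: since $\childinfosets{j,a}=\emptyset$, we get $\losshat{j}{t}\equiv\ell^t_j$ and $\tinfosetvalue{j}=\losshat{j}{t}(x^t_j)$, so Theorem~\ref{thm:lrd} collapses to $\cumulativeregret{j}{T}=\laminarregret{j}{T}=M_j$. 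For the inductive step, I would start from the decomposition in Theorem~\ref{thm:lrd}; using the inductive hypothesis $\cumulativeregret{j'}{T}\le M_{j'}$ for each child $j'$ and the fact that $\hat x_{j,a}\ge 0$ for $\hat x_j\in X_j\subseteq\Delta_{n_j}$, replacing each $\cumulativeregret{j'}{T}$ by $M_{j'}$ increases the bracketed expression pointwise in $\hat x_j$, hence decreases its minimum, hence only increases $\cumulativeregret{j}{T}$'s upper bound. Letting $\hat x_j^\star$ be the resulting minimizer and using $\sum_t\losshat{j}{t}(x^t_j)-\sum_t\losshat{j}{t}(\hat x_j^\star)\le\laminarregret{j}{T}$ gives $\cumulativeregret{j}{T}\le\laminarregret{j}{T}+\sum_{a\in A_j}\sum_{j'\in\childinfosets{j,a}}\hat x_{j,a}^\star M_{j'}$. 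Finally I would glue $\hat x_j^\star$ together with the subtree maximizers witnessing each $M_{j'}$ into a single feasible point of $X_{\subt{j}}$ — legitimate precisely because the decision points form a tree, so $X_{\subt{j}}$ has the product structure "choose $\hat x_j\in X_j$, then independently choose $\hat x_{\subt{j'}}\in X_{\subt{j'}}$ for each child $j'$" — and observe that the path weights multiply along edges ($\pi^{(j)}_{j''}=\hat x_{j,a}\,\pi^{(j')}_{j''}$), so the right-hand side reassembles exactly into $\sum_{j''}\pi^{(j)}_{j''}(\hat x_{\subt{j}})\laminarregret{j''}{T}\le M_j$. Applying the claim at $r$, where the subtree is all of $\convsets$ and $\pi^{(r)}_j=\pi_j$, yields $R^T=\cumulativeregret{r}{T}\le M_r=\max_{\hat x\in X}\sum_{j\in\convsets}\pi_j(\hat x)\laminarregret{j}{T}$, which is the theorem.

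I expect the main obstacle to be bookkeeping in the inductive step rather than any deep inequality: one must carefully justify that the per-decision-point minimizer/maximizers can be assembled into one feasible $\hat x_{\subt{j}}\in X_{\subt{j}}$ and that the path-weight products telescope correctly. The only genuinely load-bearing analytic fact is the sign condition $\hat x_{j,a}\ge 0$, which is what makes the monotone substitution $\cumulativeregret{j'}{T}\mapsto M_{j'}$ valid. Compactness of each $X_j$ and of $X_{\subt{j}}$, together with continuity (indeed multilinearity) of the objectives, guarantees all minima and maxima are attained, so no extra care is needed there. The corollary is then immediate: since $0\le\pi_j(\hat x)\le 1$ and $\convsets$ is finite, a sublinear bound on every $\laminarregret{j}{T}$ gives $R^T\le\sum_{j\in\convsets}\max\{0,\max_t\laminarregret{j}{T}\text{-bound}\}=o(T)$.
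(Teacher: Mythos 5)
Your proposal is correct and follows essentially the same route as the paper: the corollary is immediate from Theorem~\ref{thm:lrd_rm}, whose proof in the appendix is exactly your induction over the tree, splitting the decomposition of Theorem~\ref{thm:lrd} into the laminar regret plus a $\hat{x}_{j,a}$-weighted sum of child regrets (the paper uses $-\min(f-g)\le-\min f+\max g$ where you evaluate at the coupled minimizer, a cosmetic difference). Your extra care in justifying the monotone substitution via $\hat{x}_{j,a}\ge 0$ and the gluing of subtree maximizers is a correct and welcome elaboration of the paper's terse ``applying this recurrence inductively.''
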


Theorem~\ref{thm:lrd_rm} shows that overall regret can be minimized by
minimizing each laminar regret separately. In particular, this means that if we
have a regret minimizer for each decision point $j$ that can handle the
structure of the convex set $X_j$ and the convex loss from \eqref{eq:losshat},
then we can apply those regret minimizers individually at each information set,
and Theorem~\ref{thm:lrd_rm} guarantees that overall regret will be bounded by a
weighted sum over those local regrets. For example, if each local regret
minimizer has regret that grows at a particular sublinear rate, then the overall
regret is also guaranteed to grow only at that sublinear rate.




Our result gives an alternative proof of CFR. This is arguably simpler than
existing proofs, because we show directly why regret over a sequential
decision-making space decomposes into individual regret terms, as opposed to
bounding terms in order to fit the CFR framework.
Finally, our result also generalizes CFR to new settings: we show how CFR can be
implemented on arbitrary convex subsets of simplexes and with convex losses
rather than linear.


\subsection{Sequence form for sequential decision processes}
So far we have described the decision space as a product of convex sets where
the choice of each action is taken from a subset of a simplex $X_j\subseteq
\Delta_{n_j}$. This formulation has a drawback: the expected-value function for
a given strategy is not linear. Consider taking action $a$ at decision point $j$. In order to compute the expected overall contribution of that decision, its local payoff $g_{j,a}$ has to be weighted by the product of probabilities
of all actions on the path to $j$ and by $x_{j,a}$. So, the overall expected utility is nonlinear and non-convex.  We now present a well-known alternative representation of
this decision space which preserves linearity. While we will mainly be working in
the product space $X$, it will occasionally be useful to move to this equivalent
representation to preserve linearity.

The alternative formulation is called the \emph{sequence form}. In that representation, 
every convex set $j\in \convsets$ is scaled by the parent variable leading to
$j$. In other words, the sum of values at $j$ now sum to the value of the parent
variable. In this formulation, the value of a particular action then represents
the probability of playing the whole \emph{sequence} of actions from the root to
that action. This allows each term in the expected loss to be weighted only by
the sequence ending in the corresponding action. The sequence form has been used
to instantiate linear programming~\citep{Stengel96:Efficient} and first-order
methods~\citep{Hoda10:Smoothing,Kroer15:Faster,Kroer17:Theoretical} for
computing Nash equilibria of zero-sum EFGs. There is a straightforward mapping
between any $x\in X$ to its corresponding sequence form: simply assign each
sequence the product of probabilities in the sequence. Likewise, going from
sequence form to $X$ can be done by dividing each $x_{j,a}$ by the value
$x_{p_j}$ where $p_j$ is the entry in $x$ corresponding to the parent of $j$. We
let $\mu$ be a function that maps each $x\in X$ to its corresponding
sequence-form vector. For the reverse direction $\mu^{-1}$, there is 
ambiguity because $\mu$ is not injective. Nonetheless, an inverse can be
computed in linear time.




\section{Application Domains}

\label{sec:sequential_decision_making}

Because we only need a finite sequential tree structure of the decision space, our
framework captures a very broad class of SDM problems. In this section we
describe how our framework can be applied to a number of prominent applications,
such as POMDPs and EFGs. In general, our framework can be applied to any SDM
problem where one or more agents are faced with a finite sequence of decisions
that form a tree, such that agents always remember all past actions. The
specific decision problem at each stage may depend on the past decisions as well
as stochasticity. The fact that we require the decision space to be tree
structured might seem limiting from the perspective of compactly representing
the decision space. However, this has successfully been dealt with in 
applications by using state- or value-estimation techniques, rather than fully
representing the original problem~\citep{Moravvcik17:DeepStack,Jin18:Regret}.

One example class of a single-agent decision problems that we can model 
is finite-horizon POMDPs where the history of states and actions
is remembered. In that case, each decision point corresponds to a specific
sequence of actions and observations made by the agent. This setting is
reminiscent of the POMDP setting considered by \citet{Jin18:Regret}. This type
of model can be used to model sequential medical treatment planning when
combined with results on imperfect-recall
abstraction~\citep{Lanctot12:No,Chen12:Tractable,Kroer16:Imperfect}, and has
potential applications in steering evolutionary
adaptation~\citep{Sandholm15:Steering,Kroer16:Sequential}. Our framework allows
more general models for such problems via our generalization to convex decision
points and convex losses; for example our framework could be used for
regularized models. For instance in a medical settings, we may want to regularize the
complexity of the treatment plan.





\subsection{Extensive-form games with convex-concave saddle-point structure}

In an extensive-form game with perfect recall each player faces a sequential decision-making problem,
of the type described in the previous section and in Figure~\ref{fig:kuhn treeplex player1}. The set of next potential
decision points $\childinfosets{j,a}$ is based on observations of stochastic
outcomes and actions taken by the other players.


Here, we will focus on two-player zero-sum EFGs with perfect recall, but with
slightly more general utility structure than is usually considered. In
particular, we assume that we are solving a convex-concave saddle-point problem
of the following form:
\begin{align}
  \min_{x\in X} \max_{y \in Y} \big\{ \mu(x)^{\!\top}\! A \mu(y) + d_1(\mu(x)) - d_2(\mu(y)) \big\}, \label{eq:convex concave EFG}
\end{align}
where $X$ is the SDM polytope for Player $1$ and $Y$ is the SDM polytope of
Player $2$. Each $d_i$ is assumed to be a dilated convex function of the
form

\vspace{-4mm}
{\footnotesize
\begin{align*}
\vspace{-3mm}
  d_i(\mu(x)) = \sum_{j \in \convsets} \mu(x)_{p_j} d_j\bigg(\frac{\mu(x)_j}{\mu(x)_{p_j}}\bigg) = \sum_{j \in \convsets} \pi_j(x) \ell_j(x_j),
  \vspace{-4mm}
\end{align*}}
that is, in the form given in \eqref{eq:loss}.

In standard EFGs, the loss function for each player at each iteration $t$ is
 defined to be the negative payoff vector associated with the
sequence-form strategy of the other player at that iteration; since we
additionally allow a \emph{regularization} term we also get a nonlinear convex
term. More formally, at each iteration $t$, the loss functions
$\ell^t_X:X\to\mathbb{R}$ and
$\ell^t_Y:Y\to\mathbb{R}$ for player 1 and 2 respectively
are defined as

\vspace{-5mm}
{\footnotesize
\begin{align*}
  \ell^t_X : x \mapsto \langle -A\mu(y^{t}), \mu(x)\rangle + d_1(x), \\
  \ell^t_Y : y\mapsto \langle A^\top \mu(x^{t}), \mu(y)\rangle + d_2(y),
\end{align*}}
\vspace{-5mm}

\noindent where $A$ is the sequence-form payoff matrix of the
game~\citep{Stengel96:Efficient}. Some simple algebra shows that
$\ell^t_X$ and $\ell^t_Y$ are indeed separable (that is, they
can be written in the form of Equation~\ref{eq:loss}), where each
decision-point-level loss $\ell^t_{j, X}$ and $\ell^t_{j,Y}$
is a convex function.

This choice of loss function is justified by the fact that the induced
regret-minimizing dynamics for the two players lead to a convex-concave
saddle-point problem. Specifically, assume the two players play the game $T$
times, accumulating regret after each iteration as in Figure~\ref{fig:no
  alternation}.

\begin{figure}[ht]
\vspace{-4mm}
  \centering
  \begin{tikzpicture}[scale=0.85]
    \draw[thick] (0, 0) rectangle (1.2, .8);
    \node at (.6, .4) {$X$};
    \draw[thick] (0, -1) rectangle (1.2, -0.2);
    \node at (.6, -.6) {$Y$};
    \draw[->] (-.8, .4) -- (0, .4) node[above left] {$\ell_{X}^{t-1}$};
    \draw[->] (-.8, -.6) -- (0, -.6) node[above left] {$\ell_{Y}^{t-1}$};
    \draw[->] (1.2, .4) node[above right] {$x^t$} -- (2.0, .4);
    \draw[->] (1.2, -.6) node[above right] {$y^t$} -- (2.0, -.6);

    \draw[thick] (2.0, .2) rectangle (2.4, .6);
    \draw[thick] (2.0, .2) -- (2.4, .6);
    \draw[thick] (2.0, -.8) rectangle (2.4, -.4);
    \draw[thick] (2.0, -.8) -- (2.4, -.4);

    \draw[->] (2.4, .4) -- (2.6, .4) -- (3.2, -.6) -- (4, -.6) node[above left] {$\ell_{Y}^{t}$};
    \draw[->] (2.4, -.6) -- (2.6, -.6) -- (3.2, .4) -- (4, .4) node[above left] {$\ell_{X}^{t}$};

    \draw[thick] (4, 0) rectangle (5.2, .8);
    \node at (4.6, .4) {$X$};
    \draw[thick] (4, -1) rectangle (5.2, -0.2);
    \node at (4.6, -.6) {$Y$};
    \draw[->] (5.2, .4) node[above right] {$x^{t+1}$} -- (6.0, .4);
    \draw[->] (5.2, -.6) node[above right] {$y^{t+1}$} -- (6.0, -.6);

    \node at (6.5, -0.1) {$\cdots$};
    \node at (-1.3, -0.1) {$\cdots$};
  \end{tikzpicture}
  \vspace{-4mm}
  \caption{The flow of strategies and losses in regret minimization for games.
    The symbol \computeloss{} denotes computation/construction of the loss
    function.}
  \label{fig:no alternation}
\end{figure}
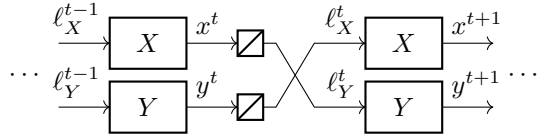

\noindent A folk theorem explains the tight connection between
low-regret strategies and approximate Nash equilibria. We will need a more
general variant of that theorem generalized to \eqref{eq:convex concave EFG}.
The convergence criterion we are interested in is the \emph{saddle-point
  residual (or gap)} $\xi$ of $(\bar x, \bar y)$, defined as
  {\footnotesize
\begin{equation*}
  \xi \!=\! \max_{\hat{y}} \{ d_1(\bar x) \!-\! d_2(\hat{y}) \!+\! \langle \bar x, A \hat y\rangle \}\! -\! \min_{\hat x} \{ d_1(\hat x) \!-\! d_2(\bar y) \!+\! \langle \hat x, A \bar y \rangle \}
\end{equation*}}
\vspace{-2mm}

We show that playing the average of a sequence of regret-minimizing
strategies leads to a bounded saddle-point residual. This result
is probably known, but it is unclear whether it has been stated in the form here.
We provide a proof in the appendix.
A closely related form is used for averaged strategy iterates in a first-order method
by \citet{Nemirovski04:Prox}.
\begin{restatable}{theorem}{folktheorem}\label{thm:folk theorem}
  If the average regret accumulated on $X$ and $Y$ by the two sets of strategies $\{x_t\}_{t=1}^T$ and $\{y_t\}_{t=1}^T$ is $\epsilon_1$ and $\epsilon_2$, respectively, then any strategy profile $(\bar x, \bar y)$ such that
  $
    \mu(\bar x) = \frac{1}{T}\sum_{t=1}^T \mu(x^t),\quad \mu(\bar y) = \frac{1}{T} \sum_{t=1}^T \mu(y^t)
  $
  has a saddle-point residual bounded by $\epsilon_1 + \epsilon_2$.
\end{restatable}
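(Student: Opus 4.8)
The plan is to start directly from the definition of regret for each of the two players, substitute the explicit separable losses $\ell^t_X$ and $\ell^t_Y$, and add the two resulting inequalities. Concretely, the hypothesis says the cumulative regret of $\{x^t\}$ on $X$ is at most $T\epsilon_1$ and that of $\{y^t\}$ on $Y$ is at most $T\epsilon_2$, so for \emph{every} comparator $\hat x\in X$ and $\hat y\in Y$ we have $\sum_{t=1}^T \ell^t_X(x^t) \le \sum_{t=1}^T \ell^t_X(\hat x) + T\epsilon_1$ and $\sum_{t=1}^T \ell^t_Y(y^t) \le \sum_{t=1}^T \ell^t_Y(\hat y) + T\epsilon_2$.

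Summing these and looking at the left-hand side, the ``diagonal'' bilinear terms $\langle A\mu(y^t),\mu(x^t)\rangle$ produced by $\ell^t_X(x^t)$ and $\ell^t_Y(y^t)$ are transposes of one another and cancel, leaving only $\sum_t\big(d_1(\mu(x^t)) + d_2(\mu(y^t))\big)$. On the right-hand side the comparators are fixed, so each bilinear term is linear in the iterates of the other player; after dividing through by $T$ I can push the average inside, using $\tfrac1T\sum_t\mu(x^t)=\mu(\bar x)$ and $\tfrac1T\sum_t\mu(y^t)=\mu(\bar y)$ --- which is exactly the averaging hypothesis, stated in sequence form precisely because $\mu$ is nonlinear so one cannot average behavioral strategies directly. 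For the left-hand side I would invoke Jensen's inequality: $d_1$ and $d_2$ are convex on sequence-form space, hence $\tfrac1T\sum_t d_1(\mu(x^t)) \ge d_1(\mu(\bar x))$ and likewise for $d_2$. Putting the two sides together yields that for every $\hat x\in X$ and $\hat y\in Y$,
\[
  \big( d_1(\bar x) - d_2(\hat y) + \langle \bar x, A\hat y\rangle \big) - \big( d_1(\hat x) - d_2(\bar y) + \langle \hat x, A\bar y\rangle \big) \le \epsilon_1 + \epsilon_2 .
\]

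The last step is to observe that the first parenthesized expression, maximized over $\hat y$, and the second, minimized over $\hat x$ (it enters with a minus sign), together form exactly the saddle-point residual $\xi$ of $(\bar x,\bar y)$; since the displayed inequality holds uniformly over $\hat x$ and $\hat y$, taking the supremum and infimum preserves it, giving $\xi \le \epsilon_1+\epsilon_2$. I expect the only real care to be in bookkeeping: keeping the sign conventions of $\ell^t_X$ and $\ell^t_Y$ consistent so that the diagonal terms really cancel and the surviving off-diagonal terms line up with the $\langle\bar x, A\hat y\rangle$ and $\langle\hat x, A\bar y\rangle$ appearing in $\xi$, and applying Jensen to $d_1,d_2$ on the sequence-form polytope (where they are genuinely convex) rather than on the behavioral product $X$; it is also worth noting that $\xi$ depends on $\bar x,\bar y$ only through $\mu(\bar x),\mu(\bar y)$, so the choice of $\mu$-preimage is immaterial. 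No structural property of the regret minimizers beyond the assumed average-regret bounds is needed.
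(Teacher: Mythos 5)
Your argument is correct and is essentially the proof the paper gives: both sum the two average-regret bounds so that the diagonal bilinear terms $\langle \mu(x^t), A\mu(y^t)\rangle$ cancel, use linearity in the fixed comparators to replace $\tfrac1T\sum_t\mu(x^t)$ and $\tfrac1T\sum_t\mu(y^t)$ by $\mu(\bar x)$ and $\mu(\bar y)$, and apply Jensen to $d_1,d_2$ on the sequence-form polytope before passing to the supremum over comparators to recover $\xi$. The one caveat you already flag yourself---that the sign convention of $\ell^t_X,\ell^t_Y$ must match the one implicit in the saddle-point problem (the paper's own proof uses $d_1(x)+\langle x,Ay^t\rangle$ rather than the $\langle -A\mu(y^t),\mu(x)\rangle$ written in the body)---is a bookkeeping inconsistency in the paper rather than a gap in your argument.
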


The above averaging is performed in the sequence-form space, which works because
that space is also convex. After averaging we can easily compute $\bar{x}$ in
linear time. Hence, by applying LRD to the decision spaces $X$ and
$Y$, we converge to a small saddle-point residual.

The fact that the averaging of the strategies is performed in sequence form explains why the traditional CFR presentation requires averaging with weights
based on the player's reaches $\pi_j$ at each decision point $j$.

Theorem~\ref{thm:folk theorem} shows that a Nash equilibrium can be computed by
taking the uniform distribution over sequence-form strategy iterates. However,
in the practical EFG-solving literature another approach called \emph{linear
  averaging} has been popular~\citet{Tammelin15:Solving}, especially for the
{\cfrp} algorithm. In linear averaging a weighted average strategy is
constructed, where each strategy $\mu(x^t)$ is weighted by $t$.
\citet{Tammelin15:Solving} show that this is guaranteed to converge specifically
when using the {\rmp} regret minimizer. It would be interesting to prove when
this works more generally. Here we make the simple observation that we can
compute \emph{both} averages, and simply use the one with better practical
performance, even in settings where only the uniform average is guaranteed to
converge.


\subsubsection{Quantal response equilibrium (QRE)}
\citet{Ling18:What} show that a reduced-normal-form QRE can be expressed as the
convex-concave saddle-point problem \eqref{eq:convex concave EFG} where $d_1$ and
$d_2$ are the (convex) dilated entropy functions usually used in first-order
methods (FOMs) for solving
EFGs~\citep{Hoda10:Smoothing,Kroer15:Faster,Kroer17:Theoretical}. This
saddle-point problem can be solved using FOMs, which would lead to fast
convergence rate due to the strongly convex nature of the dilated entropy
distance~\citep{Kroer17:Theoretical}. However, until now, no algorithms based on
local regret minimization at each decision point have been known for this
problem. Because the dilated entropy function separates into a sum over negative
entropy terms at each decision point it can be incorporated as a convex loss in
LRD. Combined with any regret-minimization algorithm that allows convex
functions over the simplex, this leads to the first regret-minimization algorithm
for computing (reduced-normal-form) QREs.



\subsubsection{Perturbed EFGs and equilibrium refinement}
Equilibrium refinements are Nash equilibria with additional important rationality
properties. Such equilibria have rarely been used in practice due to scalability
issues. Recently, fast algorithms for computing approximate refinements were
introduced~\citet{Kroer17:Smoothing,Farina17:Regret}.
Theorem~\ref{thm:lrd} gives a new tool for constructing such methods: it
immediately implies correctness of the method of \citet{Farina17:Regret}, while
 also allowing new types of refinements and regret minimizers.


\section{Erratum about Alternation in CFR$^+$}

Several tweaks to speed up the convergence of CFR have been proposed. The state of the art is CFR$^+$~\citep{Tammelin15:Solving}. 
{\cfrp}
consists of three tweaks: the {\rmp} regret minimizer, linear averaging, and
alternation. {\rmp} can be applied in our setting as well; it is simply an
alternative regret minimizer for linear losses over a simplex. We described linear averaging
earlier in this paper. Finally, {\em alternation} is the idea that at
iteration $t$, we provide Player $2$ with the utility vector associated with
the {\em current} iterate of Player $1$, rather than that of the previous
iteration, as is normally done in regret minimization.
Figure~\ref{fig:alternation} illustrates how this works, in contrast with
Figure~\ref{fig:no alternation} which shows the usual flow.
\begin{figure}[ht]
\vspace{-2mm}
  \centering
  \begin{tikzpicture}[scale=.85]
    \draw[thick] (0, 0) rectangle (1.2, .8);
    \node at (.6, .4) {${X}$};
    \draw[thick] (0, -1.2) rectangle (1.2, -0.4);
    \node at (.6, -.8) {${Y}$};
    \draw[->] (-.8, .4) -- (0, .4) node[above left] {$\ell_{\mathcal{X}}^{t-1}$};
    \draw[->] (1.2, .4) node[above right] {$x^t$} -- (2.0, .4);
    \draw[->] (1.2, -.8) node[below right] {$y^t$} -- (2.0, -.8);

    \draw[thick] (2.0, .2) rectangle (2.4, .6);
    \draw[thick] (2.0, .2) -- (2.4, .6);
    \draw[thick] (2.0, -1.0) rectangle (2.4, -.6);
    \draw[thick] (2.0, -1.0) -- (2.4, -.6);

    \draw[->] (2.4, .4) -- (2.6, .4) -- (2.6, -.2) -- (-.6, -.2) -- (-.6, -.8) -- (0, -.8) node[below left] {$\ell_{Y}^{t-1}$};
    \draw[->] (2.4, -.8) -- (2.6, -.8) -- (3.2, .4) -- (4, .4) node[above left] {$\ell_{X}^{t}$};

    \draw[thick] (4, 0) rectangle (5.2, .8);
    \node at (4.6, .4) {${X}$};
    \draw[thick] (4, -1.2) rectangle (5.2, -0.4);
    \node at (4.6, -.8) {${Y}$};
    \draw[->] (5.2, .4) node[above right] {$x^{t+1}$} -- (6.0, .4);

    \draw[thick] (6.0, .2) rectangle (6.4, .6);
    \draw[thick] (6.0, .2) -- (6.4, .6);

    \draw[->] (6.4, .4) -- (6.6, .4) -- (6.6, -.2) -- (3.4, -.2) -- (3.4, -.8) -- (4, -.8) node[below left] {$\ell_{Y}^{t}$};

    \draw[->] (5.2, -.8) node[below right] {$y^{t+1}$} -- (6.0, -.8);

    \draw[thick] (6.0, -1.0) rectangle (6.4, -.6);
    \draw[thick] (6.0, -1.0) -- (6.4, -.6);

    \draw (6.4, -.8) -- (6.6, -.8) -- (6.7, -.7);

    \node at (7.05, -0.1) {$\cdots$};
    \node at (-1.05, -0.1) {$\cdots$};
  \end{tikzpicture}
  \vspace{-3mm}
  \caption{The alternation method for CFR in games. The loss
     at iteration $t$ for $y$ is computed with $x^t$. The symbol
    \computeloss{} denotes computation/construction of the loss function.
    \vspace{-.2cm}}
  \label{fig:alternation}
\end{figure}
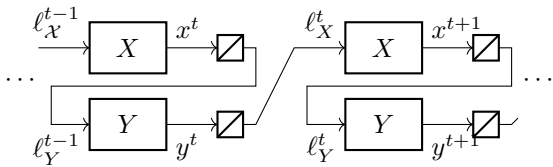
\citet{Tammelin15:Solving} state that they prove convergence of {\cfrp}; however
their proof relies on the folk theorem linking Nash equilibrium and regret. That
folk theorem is only proven for the case where no alternation is applied. We show below that the theorem does not hold with alternation!
\begin{observation}
  Let the action spaces for the players be $X = Y = [0,1]$, and let $\ell_X^t : x \mapsto x\cdot y^t$, $\ell_Y^t : y\mapsto -y\cdot x^{t+1}$ be bilinear loss functions (the superscript $t+1$ comes from the use of alternation---see Figure~\ref{fig:alternation}). Consider the sequence of strategies $x^t = t \mod 2, y^t = (t + 1) \mod 2$. A simple check reveals that after $2T$ iterations, the average regrets of the two players are both 0. Yet, the average strategies ${\bar x}^{2T} = {\bar y}^{2T} = 0.5$ do not converge to a saddle point of $xy$.
\end{observation}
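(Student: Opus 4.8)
This is a direct verification, so the plan is simply to unfold the definitions carefully. First I would make the two strategy sequences explicit: $x^t=t\bmod 2$ equals $1$ on odd rounds and $0$ on even rounds, while $y^t=(t+1)\bmod 2$ equals $0$ on odd rounds and $1$ on even rounds. Both are legal plays in the online (and in the alternation) protocol, since each iterate is a fixed deterministic function of $t$; in particular $x^{t+1}$ is already determined when the loss $\ell_Y^t$ is assembled, matching Figure~\ref{fig:alternation}.

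Next I would compute the four sums that pin down the two cumulative regrets over $2T$ rounds. For Player~$1$: the realized loss $\sum_{t=1}^{2T}\ell_X^t(x^t)=\sum_{t=1}^{2T}x^t y^t$ is $0$, because in each round exactly one of $x^t,y^t$ vanishes; the best fixed decision in hindsight minimizes $\hat x\sum_{t=1}^{2T}y^t=\hat x\,T$ over $[0,1]$, which is $0$ at $\hat x=0$; hence $R_X^{2T}=0$. For Player~$2$, tracking the alternation index: $\sum_{t=1}^{2T}\ell_Y^t(y^t)=-\sum_{t=1}^{2T}y^t x^{t+1}=-T$ (only even $t$ contribute, each $-1$), while the best fixed $\hat y$ minimizes $-\hat y\sum_{t=1}^{2T}x^{t+1}$ over $[0,1]$; since $\sum_{t=1}^{2T}x^{t+1}=x^2+\cdots+x^{2T+1}=T$, that minimum is $-T$ at $\hat y=1$, so $R_Y^{2T}=0$. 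Both average regrets are therefore exactly $0$, hence trivially sublinear.

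Finally I would take averages: each of the two $0/1$ sequences equals $1$ on exactly $T$ of the first $2T$ rounds, so $\bar x^{2T}=\bar y^{2T}=\frac{T}{2T}=\tfrac12$ --- and in this one-dimensional example the sequence form coincides with $X$ and $Y$, so the sequence-form averaging in Theorem~\ref{thm:folk theorem} produces exactly this profile. I would then evaluate the saddle-point residual of $(\tfrac12,\tfrac12)$ for the objective $xy$ on $[0,1]^2$: with $d_1=d_2=0$ and $A=1$ it equals $\max_{\hat y\in[0,1]}\tfrac12\hat y-\min_{\hat x\in[0,1]}\tfrac12\hat x=\tfrac12$, a positive constant independent of $T$, whereas every genuine saddle point of $xy$ on $[0,1]^2$ lies in $\{(0,y^*):y^*\in[0,1]\}$ and has residual $0$. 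So the averaged profile stays bounded away from the saddle-point set even though both (alternation) average regrets vanish; the folk-theorem-style argument underlying the {\cfrp} convergence proof therefore cannot remain valid once alternation is used.

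There is no deep step here: the one place requiring care --- and thus the main ``obstacle'' --- is the alternation time-shift, i.e.\ that the loss charged to Player~$2$ in round $t$ involves $x^{t+1}$ rather than $x^t$, so both its realized value and its best-in-hindsight comparator must be evaluated against these shifted losses, and one must confirm $x^2+\cdots+x^{2T+1}=T$. It is also worth remarking, though not needed for the proof, that this period-two oscillation is precisely the behaviour regret matching / {\rmp} exhibit on a small matrix game, so the counterexample is not an artifact of adversarially hand-picked iterates.
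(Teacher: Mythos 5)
Your verification is correct and is exactly the ``simple check'' the paper asserts but does not write out: the round-by-round accounting of both cumulative losses, the hindsight comparators (including the alternation shift $x^2+\cdots+x^{2T+1}=T$ for Player~2), and the conclusion that $(\bar x,\bar y)=(\tfrac12,\tfrac12)$ has residual $\tfrac12$ while the saddle points of $xy$ on $[0,1]^2$ are $\{(0,y^*)\}$ all check out. Your explicit computation of the residual is a useful addition beyond what the paper states, but the approach is the same direct verification.
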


This observation should be seen as more of a theoretical issue than practical;
alternation has been used extensively in practice, and the problem that we show
does not seem to come up for nondegenerate iterates (at least for {\cfrp}; it
may explain some erratic behavior that we have anecdotally observed with other
regret minimization algorithms when using alternation).

\section{Experiments}

We conducted multiple kinds of experiments on two EFG settings. The
first game is Leduc 5 poker~\citep{Southey05:Bayes}, a standard
benchmark in imperfect-information game solving. There is
a deck consisting of 5 unique cards with 2 copies of each. There are two rounds.
In the first round, each player places an ante of $1$ in the pot and receives a
single private card. A round of betting then takes place with a two-bet maximum,
with Player 1 going first. A public shared card is then dealt face up and
another round of betting takes place. Again, Player 1 goes first, and there is a
two-bet maximum. If one of the players has a pair with the public card, that
player wins. Otherwise, the player with the higher card wins. All bets in the first round are $1$, while all bets in the second round are $2$.
The second game is a variant of Goofspiel~\citep{Ross71:Goofspiel}, a bidding
game where each player has a hand of cards numbered $1$ to $N$. A third stack of
$N$ cards is shuffled and used as prizes: each turn a prize card is revealed,
and the players each choose a private card to bid on the prize, with the high
card winning, the value of the prize card is split evenly on a tie. After $N$
turns all prizes have been dealt out and the payoff to each player is the sum of
prize cards that they win. We use $N=4$ in our experiments.

First we investigate a setting where no previous regret-minimization algorithms
based on minimizing regret locally existed: the computation of QREs via LRM and
our more general convex losses. \citet{Ling18:What} use Newton's method for this
setting, but, as with standard Nash equilibrium, second-order algorithms do not
scale to large games (this is why {\cfrp} has been so
successful for creating human-level poker AIs). We compare
how quickly we can compute QREs compared to how quickly Nash equilibria can be
computed, in order to understand how large games we can expect to find QREs for
with our approach. To do this we run LRM with online gradient descent (OGD) at
each decision point. Because OGD is not guaranteed to stay within the simplex at
each iteration we need to project; this can be implemented via binary
search for decision points with large dimension~\citep{Duchi08:Efficient}, and
via a constant-size decision tree for low-dimension decision points. The results
are shown in Figure~\ref{fig:qre experiments}. We see that LRM performs
extremely well; in Goofspiel it converges vastly faster than {\cfrp}, and in
Leduc 5 it converges at a rate comparable to {\cfrp} and eventually becomes
faster. This shows that QRE computation via LRM likely scales to extremely large
EFGs, such as real-world-sized poker games (since {\cfrp} is known
to scale to such games).
\begin{figure}[t]
  \centering
  \includegraphics[width=0.85\columnwidth]{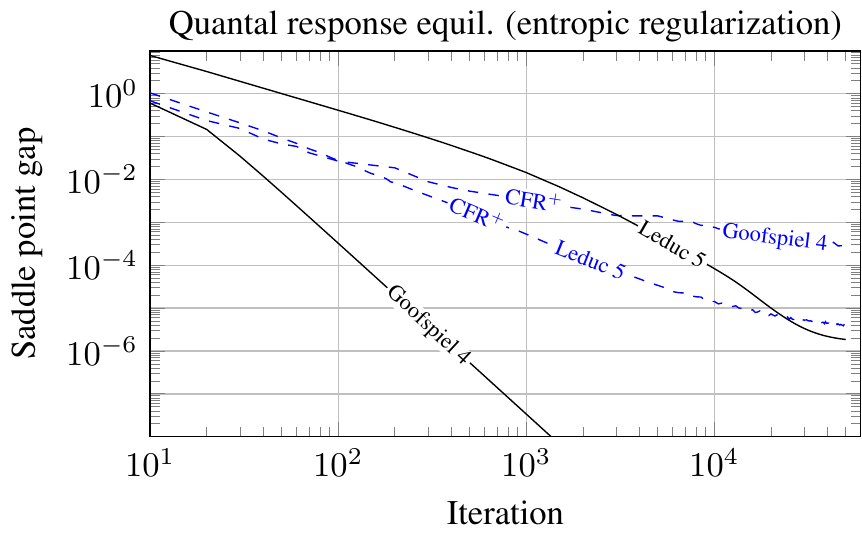}
  \vspace{-3mm}
  \caption{The QRE saddle-point gap as a function of the number of iterations
    for each game. The convergence rates of {\cfrp} for Nash equilibrium is
    shown for reference.\vspace{-3mm}}
  \label{fig:qre experiments}
\end{figure}

In the second set of experiments we investigate the speed of convergence for
solving $\ell_2$-regularized EFGs. Again we include the convergence rate of
standard CFR and {\cfrp} for Nash equilibrium computation as a benchmark. The
results for Leduc 5 are in Figure~\ref{fig:l2 leduc}. Solving
the regularized game is much faster than computing an Nash equilibrium via {\cfrp} except for
extremely small amounts of regularization. Results for Goofspiel
are in the appendix; the results are very
similar to the ones for Leduc 5.
\begin{figure}[th]
  \centering
  \includegraphics[width=0.85\columnwidth]{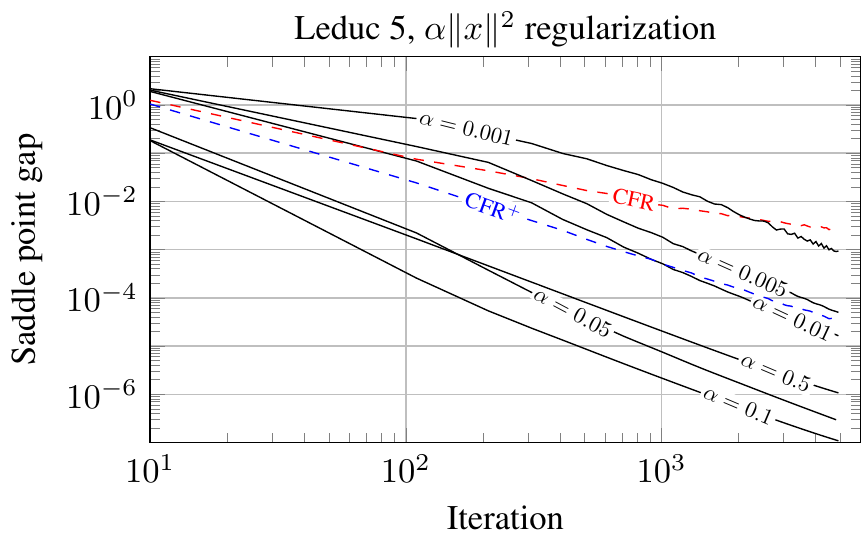}
  \vspace{-4mm}
  \caption{The saddle-point gap as a function of the number of iterations for
    $\ell_2$-regularized Leduc 5 for varying regularization amounts. The
    convergence rates of {\cfrp} for Nash equilibrium is shown for reference.\vspace{-4mm}}
  \label{fig:l2 leduc}
\end{figure}

In the third set of experiments we investigate the performance of LRM in
a single-agent-learning setting: learning how to exploit a static opponent where
we observe repeated samples from their strategy. We consider a
setting where the exploiter wishes to maximally exploit subject to staying near a pre-computed Nash equilibrium in
order to avoid opening herself up to exploitability~\citep{Ganzfried11:Game}. We model this in a new way: as a
regularized online SDM problem, where the loss for the exploiter is

\vspace{-5mm}
\begin{align}
  \ell^t(x) \defeq \langle -A\mu(y^t), \mu(x) \rangle + \alpha D(x \| x^{NE})
  \label{eq:nash distance regularized}
\end{align}
\vspace{-5mm}

\noindent where $y^t$ is the $t$'th observation from the opponent's strategy and $D(x \|
x^{NE})$ is the dilated $\ell_2$-based Bregman divergence between the NE
strategy $x^{NE}$ and $x$. The opponent's suboptimal strategy was computed by running CFR$^+$ until a gap of 0.1 was reached.
We stop the training of the exploiter after 5000 iterations or when an average regret of 0.0005 was reached, whichever happens first.
Figure~\ref{fig:exploit near
ne} shows the results. The ``utility increase'' line shows how much the agent gains by moving away
from the Nash equilibrium and towards an exploitative strategy, while the
``exploitability'' shows to what extent the agent thereby opens herself up to
being exploited by an optimal adversary. We see that this model can indeed be
used as a scalable proxy for trading off exploitation and exploitability by
varying $\alpha$.
\begin{figure}[th]
  \centering
  \includegraphics[width=0.85\columnwidth]{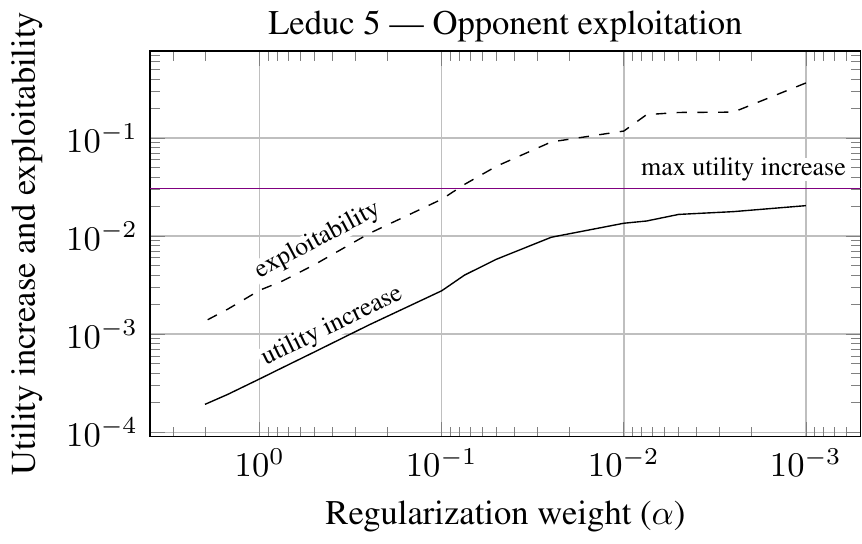}
  \vspace{-5mm}
  \caption{The utility increase from exploitation, and the resulting increase in
    the agent's own exploitability, as a function of decreasing penalization on
    distance from Nash equilibrium in \eqref{eq:nash distance regularized}. The
    straight line shows the value of best responding (i.e. maximally exploiting)
    to the opponent strategy.\vspace{-4mm}}
  \label{fig:exploit near ne}
\end{figure}

Our experiments are preliminary: we use simple OGD with very
little tuning, and focus on iteration complexity rather than runtime.
There are several reasons to expect that LRM for regularized games could be made
much faster. For one, the laminar losses are strongly convex,
so accelerated methods could be employed. This leads to significantly better
theoretical convergence rate than that of {\cfrp} for Nash equilibrium, and
could likely be exploited in practice also. Furthermore, we used OGD for
convenience, but one could most likely employ a projection-free algorithm and
thus have the computational cost at each decision point be the same as for
{\cfrp} while getting a faster convergence rate.

\vspace{-1mm}\section{Conclusions and Future Research}

We presented LRD, a new decomposition of the regret associated with a sequential
action space into regrets associated with individual decision points. We developed
our technique for general compact convex sets and convex losses at each decision
point, thus providing a generalization of CFR beyond simplex decision points and
linear loss. We then showed that our results lead to a new class of
regret-minimization algorithms that solve SDM problems by minimizing regret
locally at each decision point. Although more general, our proof also provides a new
perspective on the CFR algorithm in terms of our regret decomposition, and we
explained the need for weighting by reach as a consequence of averaging in
sequence form. We then showed that our approach can be used to compute
regularized equilibria as well as Nash equilibrium refinements, and gave the
first regret-minimization algorithm for computing (reduced-normal-form) quantal response equilibrium (QRE) (based
on local regrets). We showed experimentally that even a preliminary variant of
LRM can be used to compute QREs in EFGs at a rate that is comparable to Nash equilibrium finding using {\cfrp},
thus yielding the first algorithm for computing QREs in extremely large EFGs. We
similarly showed that $\ell_2$-regularized equilibrium can be computed very quickly with out method.
Finally, we showed how our approach can be used as a new approach to opponent exploitation, and to control the tradeoff between exploitation and exploitability.

It would be interesting to investigate tweaks to the algorithms that use LRM in
order to understand what variants yield best practical
performance. It would also be interesting to find further applications where our
new types of decision spaces and loss functions can be used.

\section*{Acknowledgments}
This material is based on work supported by the National Science Foundation under grants IIS-1718457, IIS-1617590, and CCF-1733556, and the ARO under award W911NF-17-1-0082. Christian Kroer is supported by a Facebook Fellowship.

\bibliography{dairefs/dairefs}
\bibliographystyle{aaai}

\clearpage
\appendix
\appendix
\section{Supplementary Material}
In this appendix we provide supplementary material that did not fit in the body of the paper.

\subsection{Omitted proofs}

\subsubsection{Folk theorem for convex-concave saddle point problems}

Here we will show a slightly more general version of the folk theorem, where
we allow convex losses. We must then evaluate the \emph{saddle-point residual}
$\xi$ of $(\bar x, \bar y)$, which is defined as:
\begin{align*}
  \xi &= \max_{\hat{y}} \{ d_1(\bar x) - d_2(\hat{y}) + \langle \bar x, A \hat y\rangle \} \\[-2mm]
      & \hspace{3.3cm}- \min_{\hat x} \{ d_1(\hat x) - d_2(\bar y) + \langle \hat x, A \bar y \rangle \}
\end{align*}

We now show that playing the average of a sequence of regret-minimizing
strategies leads to a bounded saddle-point residual. This result is also known,
though it is unclear whether it has been stated in the form used here. A very related
form is used for averaged strategy iterates in a first-order method by
\citet{Nemirovski04:Prox}.
\folktheorem*
\begin{proof}
  \begin{align*}
    R_1^T &\defeq\frac{1}{T}\sum_{t=1}^T \big(d_1(x^t) + \langle x^t, Ay^t\rangle\big)\\[-4mm]
    & \hspace{3.0cm}- \min_{\hat{x}} \left\{ d_1(\hat{x}) + \langle \hat{x}, A \bar{y}\rangle\right\} \le \epsilon_1
  \end{align*}
  \begin{align*}
    R_2^T &\defeq\frac{1}{T}\sum_{t=1}^T \big(d_2(y^t) - \langle x^t, Ay^t\rangle\big)\\[-4mm]
    & \hspace{3.0cm}- \min_{\hat{y}} \left\{ d_2(\hat{y}) - \langle \bar{x}, A\hat{y}\rangle\right\} \le \epsilon_2
  \end{align*}

  Now, we are interested in evaluating the gap $\xi$ of $(\bar x, \bar y)$. To this end:
  \begin{align*}
    \xi &= \max_{\hat{y}} \{ d_1(\bar x) - d_2(\hat{y}) + \langle \bar x, A \hat y\rangle \} \\[-2mm]
        & \hspace{3.3cm}- \min_{\hat x} \{ d_1(\hat x) - d_2(\bar y) + \langle \hat x, A \bar y \rangle \} \\
        &= \max_{\hat y} \{-d_2(\hat y) + \langle \bar x, A \hat y\rangle \} - \min_{\hat x} \{ d_1(\hat x) + \langle \hat x, A \bar y\rangle\}\\[-2mm]
        & \hspace{5.7cm} + d_1(\bar x) + d_2(\bar y)\\
        &= -\min_{\hat y} \{d_2(\hat y) - \langle \bar x, A \hat y\rangle \} - \min_{\hat x} \{ d_1(\hat x) + \langle \hat x, A \bar y\rangle\}\\[-2mm]
        & \hspace{5.7cm} + d_1(\bar x) + d_2(\bar y)
  \end{align*}
  By convexity of $d_1$ and $d_2$ we can write
  \[
    d_1(\bar x) + d_2(\bar y) \le \frac{1}{T} \sum_{t=1}^T d_1(x^t) + \frac{1}{T} \sum_{t=1}^T d_2(y^t)
  \]
  Substituting into the definition of $\xi$, we find that
  \begin{align*}
    \xi &\le \frac{1}{T}\sum_{t=1}^T d_2(y^t) -\min_{\hat y} \{d_2(\hat y) - \langle \bar x, A \hat y\rangle \}\\[-2mm]
        & \hspace{2.0cm} +\frac{1}{T}\sum_{t=1}^T d_1(x^t) -\min_{\hat x} \{ d_1(\hat x) + \langle \hat x, A \bar y\rangle\}\\
        &= R_1^T + R_2^T\\
        &\le \epsilon_1 + \epsilon_2,
  \end{align*}
  as we wanted to show.
\end{proof}

\thmlrdrm*
\begin{proof}
  From Theorem~\ref{thm:lrd} and by using the fact that
  \[-\min (f - g) \le -\min f +
  \max g \]
  we get that the cumulative regret can be written as the following
  recurrence:
  \begin{align*}
    & \cumulativeregret{j}{T} \le \laminarregret{j}{T}  +  \max_{\xhat_j \in X_j}\sum_{a\in A_j} \xhat_{j,a} \sum_{j' \in \childinfosets{j,a}} \!\!\cumulativeregret{j'}{T}.
  \end{align*}
  Applying this recurrence inductively gives the theorem.
\end{proof}

\subsection{Additional experimental results}
\begin{figure}[h]
  \centering
  \includegraphics[width=\columnwidth]{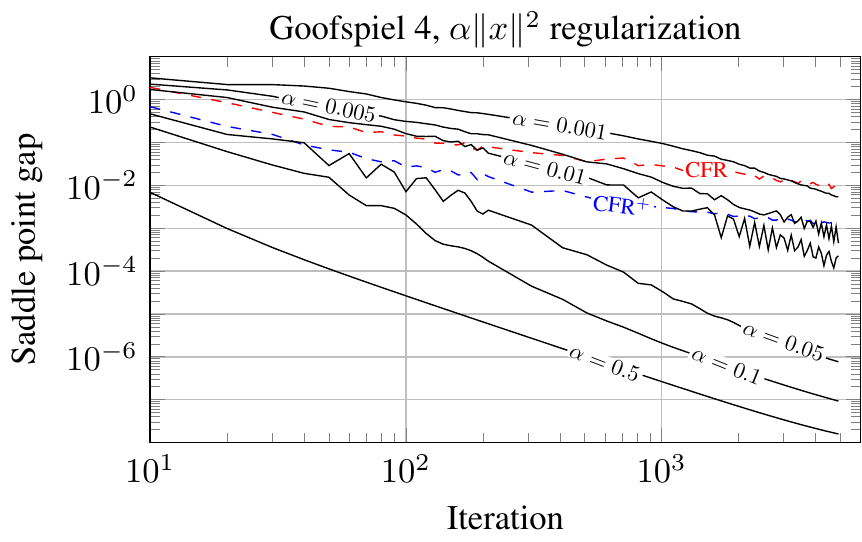}
  \caption{The saddle-point gap as a function of the number of iterations for
    $\ell_2$-regularized Goofspiel 4, for varying regularization amounts.}
  \label{fig:l2 goofspiel}
\end{figure}

\end{document}